\documentclass{article}
\usepackage{graphicx}  

\usepackage{amsthm}
\usepackage{xcolor}  
\usepackage[margin=1in]{geometry}
\usepackage{lmodern}
\usepackage{microtype}

\usepackage{amsmath,amssymb,mathtools}

\newtheorem{fact}{Fact}
\newtheorem{theorem}{Theorem}

\newtheorem{definition}{Definition}
\newtheorem{corollary}{Corollary}

\newcommand{\TODO}{\textcolor{red}{\TODO}}

\newcommand{\C}{\mathbb{C}}
\newcommand{\Tr}{\operatorname{tr}}
\newcommand{\AME}{\mathrm{AME}}
\newcommand{\Id}{I}

\usepackage{soul}
\usepackage{hyperref}

\title{Non-existence of stabilizer absolutely maximally entangled states\\across infinitely many configurations}
\author{
    Hyunho Cha\\
    \small NextQuantum and Department of Electrical and Computer Engineering\\
    \small Seoul National University, Seoul 08826, Republic of Korea\\
    \small \texttt{ovalavo@snu.ac.kr}
}
\date{}

\begin{document}

\maketitle

\begin{abstract}
We prove a general reduction theorem for stabilizer absolutely maximally entangled states in composite local dimension. If a stabilizer $\AME(n,D)$ state exists and $D=\prod_{i=1}^m q_i$ is the prime-power factorization of $D$, then for every nonempty subset of factors there exists a stabilizer $\AME\bigl(n,\prod_{i\in M} q_i\bigr)$ state. Thus any obstruction at a prime-power factor immediately obstructs stabilizer AME states in the composite dimension.
\end{abstract}

\section{Introduction}

Absolutely maximally entangled (AME) states are pure multipartite states whose sufficiently small reduced states are maximally mixed. The case of four six-dimensional parties is particularly interesting. Recent works exhibited non-stabilizer $\AME(4,6)$ states \cite{rather2022thirty, casas2026quantum}, while the existence of a stabilizer $\AME(4,6)$ state remained unclear until \cite{gross2025thirty}, which established its non-existence in the language of 2-unitaries. We prove a stronger statement that every stabilizer AME state in composite local dimension forces stabilizer AME states on all products of its prime-power factors. The non-existence of a stabilizer $\AME(4,6)$ state then follows immediately from the known non-existence of $\AME(4,2)$ states.

\section{Preliminaries}

For a finite-dimensional Hilbert space $\mathcal{H}$, the identity operator on $\mathcal{H}$ is denoted by $\Id_{\mathcal{H}}$. If $\dim \mathcal{H}=r$, then $\Id_{\mathcal{H}}/r$ is the maximally mixed state on $\mathcal{H}$. When the underlying space is clear from context, we write $\Id_r$ for the identity on an $r$-dimensional Hilbert space.

\begin{definition}[Absolutely maximally entangled state]
Let $n,d\ge 2$. A pure state $|\Psi\rangle\in (\C^d)^{\otimes n}$ is an $\AME(n,d)$ state if for every subset $S\subseteq\{1,\dots,n\}$ with $|S|=\lfloor n/2\rfloor$, the reduced density operator on $S$ is maximally mixed:
\[
\rho_S:=\Tr_{S^c}\bigl(|\Psi\rangle\langle\Psi|\bigr)=\frac{\Id_{d^{|S|}}}{d^{|S|}}.
\]
\end{definition}

\begin{fact}
\label{fact:unitary_preserve_AME}
Local unitaries preserve the AME property.
\end{fact}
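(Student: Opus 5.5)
The plan is to check directly that a product of local unitaries commutes with the partial trace up to conjugation, and then use the fact that conjugation fixes the identity. Let $|\Psi\rangle\in(\C^d)^{\otimes n}$ be an $\AME(n,d)$ state. Let $U=U_1\otimes\cdots\otimes U_n$ with each $U_j$ unitary on $\C^d$, and set $|\Phi\rangle=U|\Psi\rangle$. Fix a subset $S\subseteq\{1,\dots,n\}$ with $|S|=\lfloor n/2\rfloor$. Write $U=U_S\otimes U_{S^c}$, where $U_S=\bigotimes_{j\in S}U_j$ and $U_{S^c}=\bigotimes_{j\in S^c}U_j$; after reordering tensor factors, both are unitaries on the respective subsystems.

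The key step is the identity
\[
\Tr_{S^c}\bigl(|\Phi\rangle\langle\Phi|\bigr)=U_S\,\Tr_{S^c}\bigl((\Id\otimes U_{S^c})|\Psi\rangle\langle\Psi|(\Id\otimes U_{S^c}^\dagger)\bigr)\,U_S^\dagger=U_S\,\rho_S\,U_S^\dagger .
\]
I will justify the two equalities in turn.

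\begin{itemize}
\item The first equality holds because operators acting only on $S$ can be pulled out of $\Tr_{S^c}$. This follows from the defining property $\Tr_{S^c}\bigl((A\otimes \Id)X(B\otimes \Id)\bigr)=A\,\Tr_{S^c}(X)\,B$, which can be checked on product operators $X=X_S\otimes X_{S^c}$ and extended by linearity.
\item The second equality is cyclicity of the trace on the traced-out factor: $\Tr_{S^c}\bigl((\Id\otimes V)X(\Id\otimes V^\dagger)\bigr)=\Tr_{S^c}(X)$ for unitary $V$. Again this is checked on product operators, where it reduces to $\Tr(VX_{S^c}V^\dagger)=\Tr(X_{S^c})$.
\end{itemize}

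Since $\rho_S=\Id_{d^{|S|}}/d^{|S|}$ by assumption, we get $U_S\rho_SU_S^\dagger=U_SU_S^\dagger/d^{|S|}=\Id_{d^{|S|}}/d^{|S|}$. This holds for every such $S$, and $|\Phi\rangle$ is again a unit-norm pure state, so $|\Phi\rangle$ is an $\AME(n,d)$ state.

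There is no real obstacle here. The only point needing care is bookkeeping: the identification $U=U_S\otimes U_{S^c}$ requires permuting tensor factors so that $S$ comes first. This permutation is itself a unitary that commutes appropriately with the partial trace, so it does not affect the argument.
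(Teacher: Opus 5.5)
Your proof is correct and complete. The paper states this as a Fact with no proof, and your argument supplies the standard justification it takes for granted: pull $U_S$ out of $\Tr_{S^c}$, remove $U_{S^c}$ by cyclicity of the trace on the discarded factors, and use $U_S\Id U_S^\dagger=\Id$.

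One point is worth noting. In the proof of Theorem~\ref{thm:main}, the Fact is applied to $U^{\otimes n}$ with $U:\C^D\to\bigotimes_{i=1}^m\C^{q_i}$, which is a unitary between distinct spaces of equal dimension. Your argument covers this case verbatim, because it only uses $U_{S^c}^\dagger U_{S^c}=\Id$ on the domain and $U_SU_S^\dagger=\Id$ on the codomain.
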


We adopt the convention for \emph{generalized Pauli operators} (also known as \emph{Weyl-Heisenberg operators}) and stabilizer states established in \cite{hostens2005stabilizer, looi2011tripartite} throughout this work.

\begin{definition}[Generalized Pauli operators, stabilizer codes, and stabilizer states]
Let \(D \ge 2\), and fix the computational basis \(\{|j\rangle\}_{j=0}^{D-1}\) of \(\mathbb{C}^D\). Define
\[
\omega := e^{\frac{2\pi i}{D}},\qquad
Z := \sum_{j=0}^{D-1} \omega^j |j\rangle\langle j|,\qquad
X := \sum_{j=0}^{D-1} |j\rangle\langle j+1|,
\]
where addition is modulo \(D\).
For \(n\) qudits, let \(X_i\) and \(Z_i\) denote the corresponding operators on the \(i\)-th tensor factor, and set
\[
\lambda := e^{\frac{2\pi i}{2D}} \qquad (\lambda^2 = \omega).
\]
A Pauli product is any operator of the form
\[
\lambda^\gamma X_1^{x_1} Z_1^{z_1} \otimes \cdots \otimes X_n^{x_n} Z_n^{z_n},
\]
where \(\gamma\) is taken modulo \(2D\) and each \(x_i,z_i\) is taken modulo \(D\). The set of all Pauli products forms the Pauli group \(P_n\).

Let \(S \subset P_n\) be an abelian subgroup consisting of linearly independent Pauli products. Its simultaneous \(+1\)-eigenspace
\[
\mathcal{C}(S) := \bigl\{\, |\psi\rangle \in (\mathbb{C}^D)^{\otimes n} : s|\psi\rangle = |\psi\rangle \text{ for all } s \in S \,\bigr\}
\]
is called the stabilizer code, and \(S\) is called its stabilizer group. The following identity always holds:
\[
|S|\,\dim \mathcal{C}(S) = D^n.
\]
Hence, if \(|S| = D^n\), then \(\dim \mathcal{C}(S)=1\). In this case \(S\) stabilizes a unique pure state, denoted by \(|S\rangle\), and \(|S\rangle\) is called a stabilizer state.
\end{definition}

\begin{theorem}[{\cite[Corollary~5]{looi2011tripartite}}]
\label{thm:looi_theorem}
Let $D=\prod_{i=1}^m q_i=\prod_{i=1}^m p_i^{e_i}$ be the prime-power factorization of $D$, where the $p_i$ are distinct primes and $q_i:=p_i^{e_i}$. For every stabilizer state $|\Phi\rangle\in (\C^D)^{\otimes n}$, there exist a unitary
\[
U:\C^D\longrightarrow \bigotimes_{i=1}^m \C^{q_i}
\]
and stabilizer states $|\phi_i\rangle\in (\C^{q_i})^{\otimes n}$ such that
\[
U^{\otimes n}|\Phi\rangle=\bigotimes_{i=1}^m |\phi_i\rangle,
\]
where the right-hand side is viewed as a state on
\[
\bigotimes_{i=1}^m (\C^{q_i})^{\otimes n}
\cong
\left(\bigotimes_{i=1}^m \C^{q_i}\right)^{\otimes n}
\cong
(\C^D)^{\otimes n}.
\]
\end{theorem}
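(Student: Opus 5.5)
The plan is to make $U$ explicit through the Chinese remainder theorem, so that conjugation by $U^{\otimes n}$ splits every Pauli product into a tensor product of Pauli products on the factors $\C^{q_i}$. The stabilizer group of $|\Phi\rangle$ then splits into one stabilizer group per factor. Define $U|j\rangle := \bigotimes_{i=1}^m |j \bmod q_i\rangle$. This is a bijection of computational bases because $\mathbb{Z}_D\cong\bigoplus_i \mathbb{Z}_{q_i}$, so $U$ is unitary.

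The shift $j\mapsto j+1$ is componentwise, so $UXU^\dagger=\bigotimes_i X_{(q_i)}$. For $Z$, use the partial fraction identity $1/D\equiv\sum_i c_i/q_i \pmod 1$, where $c_i\equiv (D/q_i)^{-1}\pmod{q_i}$. Then $\omega^j=\prod_i e^{2\pi i c_i j/q_i}$, and the $i$-th factor depends only on $j\bmod q_i$. This gives $UZU^\dagger=\bigotimes_i Z_{(q_i)}^{c_i}$. Consequently, for every Pauli product $s$, the operator $U^{\otimes n}s\,(U^{\dagger})^{\otimes n}$ is a scalar times $\bigotimes_i s^{(i)}$, where each $s^{(i)}$ is a Pauli product on $(\C^{q_i})^{\otimes n}$ with trivial phase. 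Let $S'$ be the conjugated stabilizer group. It is abelian, has $|S'|=D^n$, and stabilizes $|\Phi'\rangle:=U^{\otimes n}|\Phi\rangle$.

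Next, project onto each factor. By the CRT, choose integers $k_i$ with $k_i\equiv 1\pmod{q_i}$ and $k_i\equiv 0\pmod{q_j}$ for $j\neq i$. I will take $k_i$ to be additionally divisible by $2$ when needed, so that the $\lambda$-type phases on the other factors also die. For $s'\in S'$, the power $(s')^{k_i}$ equals a scalar times $\bigl(s^{(i)}\bigr)^{k_i}$ on factor $i$, tensored with the identity on the other factors. It still lies in $S'$, so it fixes $|\Phi'\rangle$, and this forces its scalar to be exactly the one making it a genuine stabilizer. Let $S_i$ be the group of the resulting operators restricted to factor $i$. Each $S_i$ is an abelian group of Pauli products on $(\C^{q_i})^{\otimes n}$. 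Each element has a $+1$ eigenvector, namely $|\Phi'\rangle$ viewed as a vector on factor $i$ tensored with the rest, so $S_i$ does not contain $-\Id$ or other nontrivial scalars. The elements are therefore linearly independent, and $S_i$ is a legitimate stabilizer group.

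Finally, count. Since $\sum_i k_i\equiv 1\pmod D$, and the $k_i$ can be adjusted modulo the order of $S'$, every $s'\in S'$ equals $\prod_i (s')^{k_i}$. Hence $S'\subseteq S_1\otimes\cdots\otimes S_m$, and the reverse inclusion holds by construction, so $D^n=|S'|=\prod_i|S_i|$. Each $|S_i|\le q_i^n$ by the identity $|S|\dim\mathcal{C}(S)=q_i^n$, and $\prod_i q_i^n=D^n$, so equality $|S_i|=q_i^n$ holds for every $i$. Thus each $S_i$ stabilizes a unique state $|\phi_i\rangle$. The product $\bigotimes_i|\phi_i\rangle$ is stabilized by all of $S'$, whose code space is one-dimensional, so $U^{\otimes n}|\Phi\rangle=\bigotimes_i|\phi_i\rangle$ after absorbing a global phase into one of the $|\phi_i\rangle$.

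The step I expect to be the main obstacle is the phase bookkeeping with $\lambda=e^{2\pi i/2D}$ when some $q_i$ is a power of $2$. There the Pauli group has phases of order $2q_i$ rather than $q_i$, so the CRT exponents $k_i$ and the identification of $S_i$ with a subgroup of the $q_i$-dimensional Pauli group in the paper's convention need care. I would first try to sidestep this by fixing all phases via the condition that each operator fixes $|\Phi'\rangle$, rather than tracking them symbolically.
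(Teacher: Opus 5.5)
The paper does not prove this statement itself; it imports it from Looi--Griffiths. Your CRT-splitting argument (explicit Chinese-remainder unitary, then splitting the stabilizer group into its prime-power parts via the idempotent exponents $k_i$) is a correct, self-contained proof along the standard lines behind that cited result.

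The phase issue you flag closes in one line, and so does the vague step about "adjusting the $k_i$ modulo the order of $S'$":
\begin{itemize}
\item For $T\in S_i$, the power $T^{2q_i}$ is a scalar, because the phase-free part of $T$ has order dividing $2q_i$. This scalar fixes a nonzero vector, so $T^{2q_i}=\Id$, and hence the scalar of $T$ is a power of $e^{2\pi i/(2q_i)}$. So $S_i$ really lies in the $q_i$-dimensional Pauli group of the paper's convention.
\item Similarly, $(s')^D$ is a scalar fixing $|\Phi'\rangle$, so $(s')^D=\Id$ for every $s'\in S'$. This is exactly what makes $s'=\prod_i (s')^{k_i}$ follow from $\sum_i k_i\equiv 1 \pmod D$.
\end{itemize}
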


\section{Prime-power reduction}

\begin{theorem}
\label{thm:main}
Let $n,D\ge 2$, and write the prime-power factorization of $D$ as
\[
D=\prod_{i=1}^m q_i=\prod_{i=1}^m p_i^{e_i}, \qquad q_i:=p_i^{e_i}.
\]
If a stabilizer $\AME(n,D)$ state exists, then for every nonempty subset $M\subseteq\{1,\dots,m\}$ there exists a stabilizer $\AME\bigl(n,\prod_{i\in M} q_i\bigr)$ state.
\end{theorem}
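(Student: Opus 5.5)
We take a stabilizer $\AME(n,D)$ state $|\Phi\rangle$ and apply Theorem~\ref{thm:looi_theorem} to it. This gives a local unitary $U^{\otimes n}$ with $U^{\otimes n}|\Phi\rangle=\bigotimes_{i=1}^m|\phi_i\rangle$, where each $|\phi_i\rangle\in(\C^{q_i})^{\otimes n}$ is a stabilizer state. By Fact~\ref{fact:unitary_preserve_AME}, the product state $|\Psi\rangle:=\bigotimes_i|\phi_i\rangle$ is still $\AME(n,D)$ with respect to the party structure in which party $j$ holds $\bigotimes_i \C^{q_i}$ (the $j$-th copy of each factor). Our candidate for a fixed nonempty $M$ is $|\phi_M\rangle:=\bigotimes_{i\in M}|\phi_i\rangle$. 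We regard it as a state of $n$ parties, each of local dimension $d_M:=\prod_{i\in M}q_i$. Two things then need checking: that $|\phi_M\rangle$ is AME, and that it is a stabilizer state in dimension $d_M$.

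\textbf{AME property.} Fix $S$ with $|S|=\lfloor n/2\rfloor$. Since $|\Psi\rangle$ is a product across the factor index $i$, its reduced state factorizes:
\[
\rho_S(\Psi)=\bigotimes_{i=1}^m \rho_S(\phi_i).
\]
This equals $\Id/D^{|S|}$ by assumption. Each $\rho_S(\phi_i)$ is a density operator on a space of dimension $q_i^{|S|}$, with purity at least $q_i^{-|S|}$. Since purity is multiplicative,
\[
D^{-|S|}=\prod_i \Tr\rho_S(\phi_i)^2\ge\prod_i q_i^{-|S|}=D^{-|S|}.
\]
Equality is forced, so each $\rho_S(\phi_i)$ is maximally mixed. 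Consequently each $|\phi_i\rangle$ is an $\AME(n,q_i)$ state, and the reduced state $\rho_S(\phi_M)=\bigotimes_{i\in M}\rho_S(\phi_i)$ is maximally mixed on $(\C^{d_M})^{\otimes|S|}$. Equivalently, $|\phi_M\rangle=\Tr$-factor of $|\Psi\rangle$ over $i\notin M$.

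\textbf{Stabilizer property.} This is the step I expect to be the main obstacle. Each $|\phi_i\rangle$ is stabilized by $q_i^n$ Pauli products on $\C^{q_i}$, but $|\phi_M\rangle$ must be shown to be a stabilizer state for the $d_M$-dimensional Weyl--Heisenberg group under a single identification $\C^{d_M}\cong\bigotimes_{i\in M}\C^{q_i}$ used at every site. The plan is to use the Chinese remainder theorem, since the $q_i$ are pairwise coprime.

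First, define the unitary $V:\C^{d_M}\to\bigotimes_{i\in M}\C^{q_i}$ by $|j\rangle\mapsto\bigotimes_i|j\bmod q_i\rangle$. Then $VX_{d_M}V^\dagger=\bigotimes_i X_{q_i}$, and $Z_{d_M}$ maps to $\bigotimes_i Z_{q_i}^{c_i}$ with $c_i=(d_M/q_i)^{-1}\cdot$(unit) mod $q_i$. More precisely, $\omega_{d_M}^{j}=\prod_i\omega_{q_i}^{a_i j}$ with $a_i\equiv (d_M/q_i)^{-1}\pmod{q_i}$ (up to the CRT idempotent bookkeeping). Since each $c_i$ is a unit, appropriate powers $X_{d_M}^{x}Z_{d_M}^{z}$ realize any single-factor Pauli $X_{q_i}^{x_i}Z_{q_i}^{z_i}\otimes\Id$ up to a phase, by choosing $x,z$ via CRT.

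Second, for each stabilizer generator $g$ of $|\phi_i\rangle$, take $g\otimes\Id$ on the other factors. By the previous step, $V^{\dagger\otimes n}(g\otimes\Id)V^{\otimes n}$ is a $d_M$-dimensional Pauli product up to a global phase. These products, taken over all $i\in M$, commute, and they stabilize $V^{\dagger\otimes n}|\phi_M\rangle$ once the phase is fixed by the eigenvalue; the phase is a power of $\lambda$ because phases are roots of unity of order dividing $2d_M$. Some care is needed for $q_i=2^{e}$ with the $\lambda$ convention. They generate a group of order $\prod_{i\in M}q_i^n=d_M^n$, so the state they stabilize is exactly $V^{\dagger\otimes n}|\phi_M\rangle$, which is therefore a stabilizer state. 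Alternatively, we could invoke Theorem~\ref{thm:looi_theorem}'s converse direction if available in \cite{looi2011tripartite}.

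Since $V^{\otimes n}$ is local, Fact~\ref{fact:unitary_preserve_AME} keeps the AME property, completing the argument.
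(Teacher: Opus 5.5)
Your proof is correct and follows the paper's route: Looi's decomposition, factorization of the reduced state across the prime-power index to show each $|\phi_i\rangle$ is $\AME(n,q_i)$, then tensoring the factors in $M$. The paper gets factor-wise maximal mixedness in one line, by partial-tracing $\bigotimes_i\rho_S^{(i)}=\bigotimes_i \Id_{q_i^{|S|}}/q_i^{|S|}$ over the other internal factors, so your purity argument is valid but not needed. Your CRT identification $V$ is a useful addition, since it actually justifies the stabilizer property that the paper only asserts with ``Clearly''. The phase caveat is unnecessary, because $V$ conjugates $X$ and $Z$ exactly and $\lambda_{q_i}=\lambda_{d_M}^{d_M/q_i}$.
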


\begin{proof}
Assume that $|\Psi\rangle\in (\C^D)^{\otimes n}$ is both a stabilizer state and an $\AME(n,D)$ state. By Theorem~\ref{thm:looi_theorem}, there exist a unitary
\[
U:\C^D\longrightarrow \bigotimes_{i=1}^m \C^{q_i}
\]
and stabilizer states $|\psi_i\rangle\in (\C^{q_i})^{\otimes n}$ such that
\[
|\widetilde{\Psi}\rangle:=U^{\otimes n}|\Psi\rangle=\bigotimes_{i=1}^m |\psi_i\rangle.
\]
By Fact~\ref{fact:unitary_preserve_AME}, $|\widetilde{\Psi}\rangle$ is again an $\AME(n,D)$ state.

Fix a subset $S\subseteq\{1,\dots,n\}$ with $|S|=\lfloor n/2\rfloor$. For each $i\in\{1,\dots,m\}$, define
\[
\rho_S^{(i)}:=\Tr_{S^c}\bigl(|\psi_i\rangle\langle\psi_i|\bigr).
\]
Since $|\widetilde{\Psi}\rangle\langle\widetilde{\Psi}|=\bigotimes_{i=1}^m |\psi_i\rangle\langle\psi_i|$, its reduced state on $S$ is
\[
\widetilde{\rho}_S:=\Tr_{S^c}\bigl(|\widetilde{\Psi}\rangle\langle\widetilde{\Psi}|\bigr)=\bigotimes_{i=1}^m \rho_S^{(i)}.
\]
On the other hand, $|\widetilde{\Psi}\rangle$ is an $\AME(n,D)$ state, so
\[
\widetilde{\rho}_S=\frac{\Id_{D^{|S|}}}{D^{|S|}}.
\]
After regrouping tensor factors,
\[
\bigotimes_{k\in S}\bigotimes_{i=1}^m \C^{q_i}
\cong
\bigotimes_{i=1}^m (\C^{q_i})^{\otimes |S|},
\]
we may rewrite the maximally mixed state as
\[
\frac{\Id_{D^{|S|}}}{D^{|S|}}
=
\bigotimes_{i=1}^m \frac{\Id_{q_i^{|S|}}}{q_i^{|S|}}.
\]
Taking the partial trace over all internal factors except the $i$th one yields
\[
\rho_S^{(i)}=\frac{\Id_{q_i^{|S|}}}{q_i^{|S|}}
\qquad \text{for each } i=1,\dots,m.
\]
Because $S$ was arbitrary, every $|\psi_i\rangle$ is a stabilizer $\AME(n,q_i)$ state.

Now let $M\subseteq\{1,\dots,m\}$ be nonempty, and set
\[
d_M:=\prod_{i\in M} q_i,
\qquad
|\psi_M\rangle:=\bigotimes_{i\in M} |\psi_i\rangle.
\]
For any $S\subseteq\{1,\dots,n\}$ with $|S|=\lfloor n/2\rfloor$,
\[
\Tr_{S^c}\bigl(|\psi_M\rangle\langle\psi_M|\bigr)
=
\bigotimes_{i\in M} \rho_S^{(i)}
=
\bigotimes_{i\in M} \frac{\Id_{q_i^{|S|}}}{q_i^{|S|}}
=
\frac{\Id_{d_M^{|S|}}}{d_M^{|S|}}.
\]
Clearly, $|\psi_M\rangle$ is a stabilizer state. Hence $|\psi_M\rangle$ is a stabilizer $\AME(n,d_M)$ state.
\end{proof}

\begin{figure}
    \centering
    \includegraphics[width=0.9\linewidth]{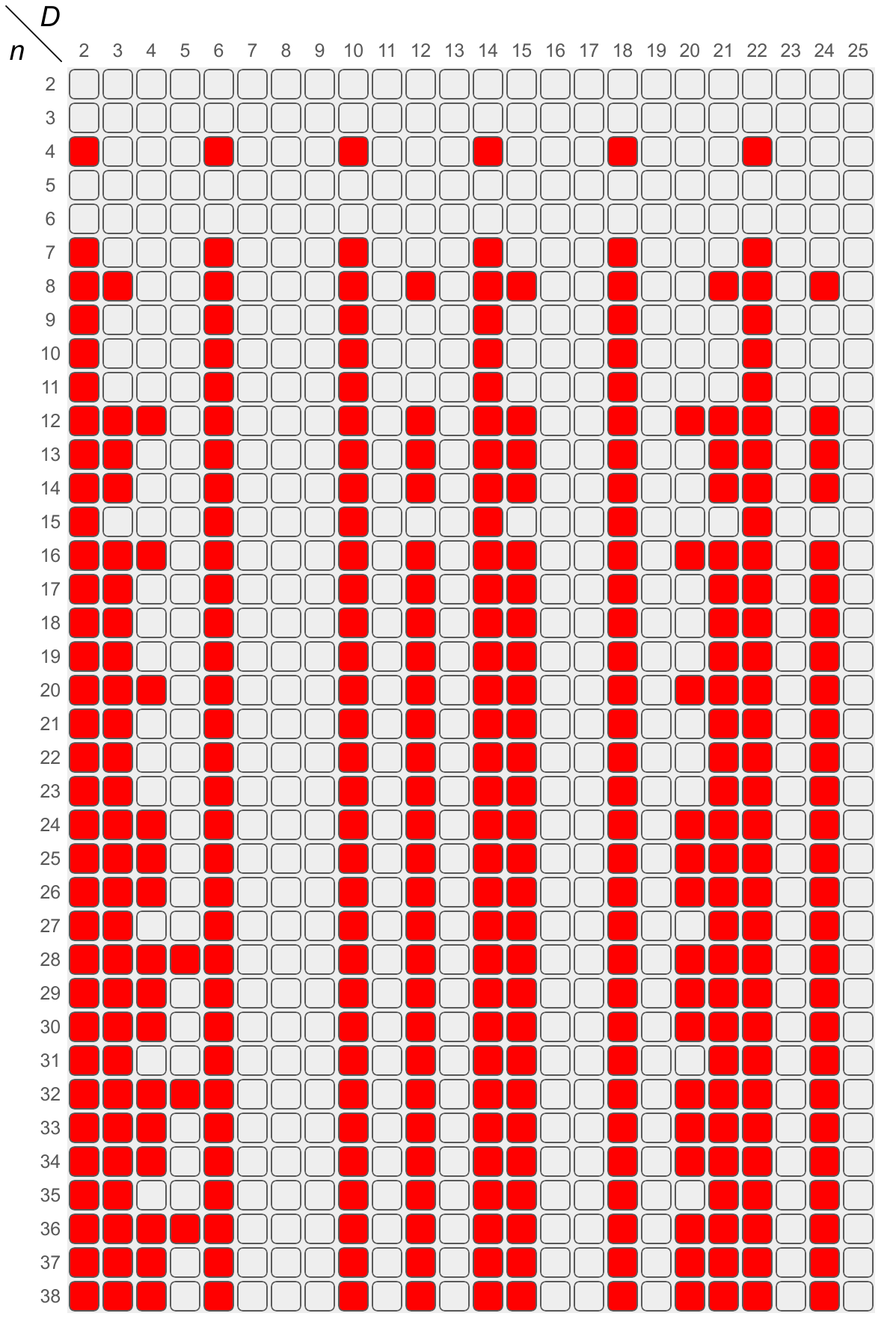}
    \caption{A red $(n,D)$ square indicates that a stabilizer $\AME(n,D)$ state is guaranteed not to exist.}
    \label{fig:NoStabAME}
\end{figure}

This theorem can be applied to any established non-existence result for lower dimensions, such as those documented in \cite{huber2020table}.
Figure~\ref{fig:NoStabAME} illustrates configurations for which stabilizer AME states are excluded.

\section{Four-party consequences}

Theorem~\ref{thm:main} can be specialized to four parties, but we remark that the following results are already latent within the framework of \cite{gross2025thirty}.

\begin{theorem}[{\cite[Theorem~1]{higuchi2000entangled}}]
No $\AME(4,2)$ state exists.
\end{theorem}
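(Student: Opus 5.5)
The statement is cited from Higuchi and Sudbery, so my plan reproduces their classical argument for four qubits. I argue by contradiction. Suppose $|\Psi\rangle\in(\C^2)^{\otimes 4}$ is an $\AME(4,2)$ state, so every two-qubit marginal is $\Id_4/4$. The first step uses the Schmidt decomposition across each of the three bipartitions $\{1,2\}|\{3,4\}$, $\{1,3\}|\{2,4\}$ and $\{1,4\}|\{2,3\}$. Since the marginals are maximally mixed, each cut gives
\[
|\Psi\rangle=\tfrac12\sum_{k=1}^4 |u_k\rangle_{12}|v_k\rangle_{34}
\]
with orthonormal bases $\{u_k\}$ and $\{v_k\}$. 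Equivalently, the map $A_{12\to34}$ with $|\Psi\rangle=(I\otimes A)\frac{1}{2}\sum_k|k\rangle|k\rangle$ is unitary for all three pairings. Thus an $\AME(4,2)$ state is the same object as a $4\times4$ unitary $U$ on two qubits whose realignment $U^{R}$ and partial transpose $U^{\Gamma}$ are also unitary, i.e.\ a $2$-unitary of two qubits.

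The second step is to show that no two-qubit $2$-unitary exists. I would use the operator-Schmidt/Cartan (KAK) decomposition,
\[
U=(A_1\otimes A_2)\exp\bigl(i(c_1XX+c_2YY+c_3ZZ)\bigr)(B_1\otimes B_2).
\]
Local unitaries preserve unitarity of realignment and partial transpose, as in Fact~\ref{fact:unitary_preserve_AME}, so it suffices to treat the diagonal core in the Bell basis. Unitarity of $U^R$ means the operator-Schmidt coefficients of $U$ are all equal, i.e.\ maximal entangling power. Writing the core as $\sum_j \mu_j \sigma_j\otimes\sigma_j$ with $\sigma_0=I$, this forces $|\mu_j|=1/2$ for all $j$. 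Unitarity of $U^\Gamma$ then imposes the analogous condition after swapping the roles of input and output on one side. Plugging the phases $\mu_j=\tfrac12 e^{i\theta_j}$, which come from the eigenvalues $e^{i(\pm c_1\pm c_2\pm c_3)}$, into the two conditions yields incompatible constraints on the $c_j$. For example, maximal entangling power pins $(c_1,c_2,c_3)$ to $(\pi/4,\pi/4,c)$-type points such as SWAP-like or DCNOT-like gates. At those points one checks directly that $U^\Gamma$ fails to be unitary; for SWAP, $U^\Gamma$ is proportional to the projector onto the Bell state, which has rank one.

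The main obstacle is this parameter case analysis in the second step. A cleaner alternative I would try first is Higuchi and Sudbery's direct argument. Write $|\Psi\rangle=\frac12\sum_k|u_k\rangle|v_k\rangle$, note that $|\Psi\rangle$ has the same form in the other cuts, and show that $\sum_{i,j}\langle\Psi|(\sigma_i\otimes\sigma_j\otimes I\otimes I)|\Psi\rangle$-type correlations lead to a contradiction. In practice, via the Bloch (Pauli) expansion: all weight-$\le2$ Pauli expectations vanish, so purity $\Tr\rho^2=1$ and the shadow/quantum MacWilliams identity for $n=4,d=2$ force a negative weight coefficient. Concretely, the weight enumerator $A=(1,0,0,A_3,A_4)$ must satisfy $A_3+A_4=15$, and the MacWilliams transform gives a negative shadow coefficient. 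I expect this linear-programming check to be the shortest rigorous route.
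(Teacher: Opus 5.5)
The paper does not prove this statement. It cites Higuchi--Sudbery, so any self-contained proof is a different route from the paper's. Both of your routes can be completed, but as written each has a specific hole.

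\textbf{The $2$-unitary route.} Unitarity of $U^R$ means \emph{maximal operator entanglement}, not maximal entangling power. SWAP has unitary realignment but zero entangling power, while CNOT has maximal two-qubit entangling power but operator Schmidt rank $2$. Followed literally, your justification would single out the wrong gates. What you need is the explicit computation.
\begin{itemize}
\item Write the core as $\sum_j\mu_j\sigma_j\otimes\sigma_j$ and set $a_j=\cos 2c_j$. Then $4|\mu_0|^2=1+a_1a_2+a_1a_3+a_2a_3$, and $4|\mu_k|^2$ for $k=1,2,3$ is obtained by flipping the signs of the two products containing $a_k$.
\item Hence all $|\mu_j|=\tfrac12$ iff $a_1a_2=a_1a_3=a_2a_3=0$. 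This means at least two $c_j$ lie in $\tfrac{\pi}{4}+\tfrac{\pi}{2}\mathbb{Z}$, and local Cliffords and Paulis reduce to $c_1=c_2=\tfrac{\pi}{4}$ with $c_3=c$ arbitrary.
\item You must then exclude this whole line, not only SWAP. Here the core equals $\tfrac12\bigl[e^{ic}(II+ZZ)+ie^{-ic}(XX+YY)\bigr]$. Since $Y^T=-Y$, its partial transpose is $\tfrac12\bigl[e^{ic}(II+ZZ)+ie^{-ic}(XX-YY)\bigr]$, which annihilates $|01\rangle$ for every $c$.
\item The reduction to the core is legitimate because $\bigl((A_1\otimes A_2)V(B_1\otimes B_2)\bigr)^{\Gamma}=(A_1\otimes B_2^T)V^{\Gamma}(B_1\otimes A_2^T)$.
\end{itemize}

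\textbf{The enumerator route.} The condition $A_3+A_4=15$ alone gives no contradiction. The $x^4$-coefficient of the shadow enumerator $A(x+3y,\,y-x)$ is $1-A_3+A_4=2A_4-14$, which is negative only if $A_4<7$.
\begin{itemize}
\item You must first pin $A_4$ with the pure-state quantum MacWilliams identity $A(x,y)=A\bigl(\tfrac{x+3y}{2},\tfrac{x-y}{2}\bigr)$. Its $x^3y$-coefficient gives $0=A_1=(12-4A_4)/16$, so $A_4=3$ and $A_3=12$. Equivalently, use $\Tr\rho_S^2=\tfrac12$ for $|S|=3$.
\item Only then is the shadow coefficient $-8<0$, contradicting Rains' shadow inequality.
\item The shadow transform is not the MacWilliams transform.
\item This is Rains' argument, as used by Huber--G\"uhne--Siewert, not Higuchi--Sudbery's.
\item Your intermediate remark about two-body correlations $\langle\sigma_i\otimes\sigma_j\otimes I\otimes I\rangle$ cannot produce a contradiction by itself, since these all vanish for an AME state.
\end{itemize}

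\textbf{What each approach buys.}
\begin{itemize}
\item The paper's citation is the most economical.
\item Your KAK route is elementary but tied to $d=2$.
\item The enumerator route is a mechanical linear check that extends to other qubit cases, but it rests on Rains' theorem.
\end{itemize}
A self-contained argument shorter than either uses $\rho_{123}$.
\begin{itemize}
\item All two-body marginals are maximally mixed, so $\rho_{123}=\tfrac18(I+T)$ with $T$ a real combination of weight-$3$ Paulis only.
\item Since $\rho_4=I/2$ and the state is pure, $\rho_{123}^2=\tfrac12\rho_{123}$, hence $T^2=3I+2T$.
\item Distinct commuting weight-$3$ Paulis on three qubits differ in exactly two sites, so their products have weight $2$. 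Comparing weight-$3$ parts therefore forces $T=0$, contradicting $\Tr T^2=24$.
\end{itemize}
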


\begin{corollary}
There is no stabilizer $\AME(4,D)$ state for any $D\equiv 2\pmod 4$. In particular, there is no stabilizer $\AME(4,6)$ state.
\end{corollary}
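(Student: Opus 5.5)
We argue by contradiction, combining Theorem~\ref{thm:main} with the non-existence of $\AME(4,2)$ states. Suppose $D\equiv 2\pmod 4$ and a stabilizer $\AME(4,D)$ state exists. Then $D=2\cdot D'$ with $D'$ odd, so in the prime-power factorization $D=\prod_{i=1}^m q_i$ the prime $2$ appears with exponent exactly $e_i=1$. Hence $q_{i_0}=2$ for some index $i_0$.

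Apply Theorem~\ref{thm:main} with $n=4$ and the nonempty subset $M=\{i_0\}$. Since $\prod_{i\in M}q_i=2$, this yields a stabilizer $\AME(4,2)$ state. In particular an $\AME(4,2)$ state exists, contradicting the theorem of Higuchi and Sudbery quoted above. So no stabilizer $\AME(4,D)$ state exists.

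The special case $D=6$ follows because $6\equiv 2\pmod 4$; concretely, $6=2\cdot 3$ gives $q_1=2$ and $q_2=3$.

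The only point needing care is that the exponent of $2$ is exactly $1$. If $4\mid D$, the relevant factor would be $2^{e}$ with $e\ge 2$, and Theorem~\ref{thm:main} would only produce an $\AME(4,2^e)$ state, which is not excluded. The hypothesis $D\equiv 2\pmod 4$ is precisely what rules this out.
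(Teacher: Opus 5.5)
Your proof is correct and is the argument the paper intends: the paper gives no written proof, but places the corollary directly after the non-existence of $\AME(4,2)$ states. As you note, $D\equiv 2\pmod 4$ forces the $2$-part of $D$ to be exactly $q_{i_0}=2$, and Theorem~\ref{thm:main} with $M=\{i_0\}$ then yields a stabilizer $\AME(4,2)$ state, contradicting Higuchi--Sudbery.
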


\section{Discussion}

The reduction theorem presented here provides a clear demarcation between the existence of general AME states and their stabilizer counterparts. From a practical perspective, this result serves as a ``no-go'' theorem that streamlines future research. It should be noted, however, that our result is established within the specific stabilizer convention adopted in this work. Although theoretically negative, they ensure that researchers do not expend effort on impossible constructions. Understanding where the stabilizer formalism fails is as crucial as understanding where it succeeds, as it defines the boundary where more complex quantum resources must be invoked.






\bibliographystyle{unsrt}
\bibliography{main}

\end{document}